\DeclareMathOperator*{\argmax}{argmax}
\DeclareMathOperator*{\Tr}{Tr}
\newcommand{\bbE}{\mathbb{E}}
\newcommand{\cH}{\mathcal{H}}
\newcommand{\bB}{{\boldsymbol{B}}}
\newcommand{\bZ}{{\boldsymbol{Z}}}
\newcommand{\bx}{{\boldsymbol{x}}}
\newcommand{\bpi}{{\boldsymbol{\pi}}}
\newcommand{\nn}{\nonumber}
\def\str1#1#2{\left[ \begin{array}{c} #1 \\ #2 \end{array}\right]}
\newtheorem{theorem}{Theorem}[section]
\newtheorem{lemma}[theorem]{Lemma}
\journal{Neurocomputing}
\begin{document}

\begin{frontmatter}



\title{Quantum Annealing for Dirichlet Process Mixture Models with Applications to Network Clustering}

 \author[label1]{Issei Sato}
 \author[label2]{Shu Tanaka}
 \author[label3]{Kenichi Kurihara}
 \author[label4]{Seiji Miyashita}
 \author[label1]{Hiroshi Nakagawa}
 \address[label1]{Information Technology Center, University of Tokyo, 7-3-1, Hongo, Bunkyo-ku, Tokyo, 113-0033, Japan}
 \address[label2]{Department of Chemistry, University of Tokyo, 7-3-1, Hongo, Bunkyo-ku, Tokyo, 113-0033, Japan}
 \address[label3]{Google, 6-10-1, Roppngi, Minato-ku, Tokyo, 106-6126, Japan}
 \address[label4]{Department of Physics, University of Tokyo, 7-3-1, Hongo, Bunkyo-ku, Tokyo, 113-0033, Japan
}

\begin{abstract}
We developed a new quantum annealing (QA) algorithm for Dirichlet process mixture (DPM) models based on the Chinese restaurant process (CRP).
QA is a parallelized extension of simulated annealing (SA), i.e., it is a parallel stochastic optimization technique.
Existing approaches \citep{Kurihara:UAI2009,Sato:UAI2009} cannot be applied to the CRP because their QA framework is formulated using a fixed number of mixture components.
The proposed QA algorithm can handle an unfixed number of classes in mixture models.
We applied QA to a DPM model for clustering vertices in a network where a CRP seating arrangement indicates a network partition.
A multi core processer was used for running QA in experiments, the results of which show that QA is better than SA, Markov chain Monte Carlo inference, and beam search at finding a maximum a posteriori estimation of a seating arrangement in the CRP.
Since our QA algorithm is as easy as to implement the SA algorithm, it is suitable for a wide range of applications.
\end{abstract}

\begin{keyword}
Quantum annealing \sep Dirichlet process \sep Stochastic optimization \sep Maximum a posteriori estimation \sep Bayesian nonparametrics


\end{keyword}

\end{frontmatter}



\section{Introduction}\label{sec:intro}
Clustering is one of the most important topics	in machine learning because it is a fundamental approach to analyze differences and similarities of data.
In statistical machine learning, a probabilistic latent variable model is used for clustering.
The Dirichlet process mixture (DPM) models \citep{Antoniak74} are well studied and
they enable us to handle an unfixed number of classes, which means that we do not have to decide the number of classes in advance.
In other words, they can estimate the number of classes according to data.
A DPM model is often represented by the Chinese restaurant process (CRP) \citep{Aldous1985}, in which clustering is represented as a seating arrangement of customers in a restaurant.
This representation is a useful one helping us understand the clustering process in DPM models.

A clustering problem using a probabilistic model is generally formulated as a maximum a posteriori (MAP) estimation in statistical machine learning.
Since finding the exact MAP solution will be difficult in many cases, we have to search for an approximate one.
Markov chain Monte Carlo (MCMC) inference is widely used for the CRP \citep{Neal00}
but the MCMC is not necessarily appropriate for the MAP estimation.
When we use MCMC for the MAP estimation, we extract a single class assignment with the highest probability in the class assignments sampled from the posterior distribution.
The problem is that this sampling distribution (i.e., the posterior distribution) has to be stationary, and much iteration is needed before it converges.

\cite{Hal:AISTATS2007} showed that a beam search provides an attractive alternative to the MCMC in the CRP and another approach for the MAP estimation is a stochastic search.
One of the most well-known stochastic search algorithms is simulated annealing (SA) \citep{Kirkpatrick83}, which is similar to the MCMC but has an additional parameter,	called a temperature,  controlling the uncertainty of the search space.
SA is known to find the global optimum when the cooling temperature reduction schedule is slow enough \citep{Geman84} but such a schedule is too slow for practical use.
SA with a practical cooling schedule is therefore also affected by a local optimization problem.

In this work, we focus on a novel stochastic search algorithm, quantum annealing (QA), which  has attracted attention as an alternative annealing method for optimization problems \citep{Kadowaki98,Farhi2001,Santoro02} in quantum information science \citep{Lloyd1996,NielsenChuang2000}.
QA has been shown experimentally converge faster than to find better local optimums for Ising spin models.
It has a parameter inducing quantum fluctuation, so the search space is controlled in a way different from that in SA.
The details are explained below.

\begin{figure}[t!]
\begin{center}
\includegraphics[width=8cm]{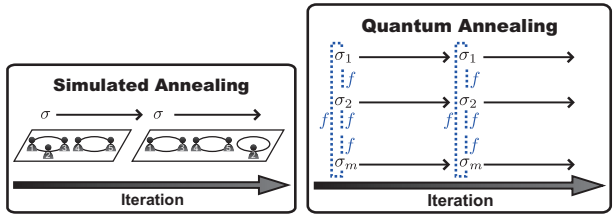}
\end{center}
\caption{The left-hand panel shows the running of SA, in which $\sigma$ indicates a seating arrangement of $N$ customers in the CRP (i.e., a class assignment of $N$ data points).
The right-hand panel shows QA, in which multiple SAs interact through $f$. $\sigma_j$ indicates a seating arrangement of the CRP running in the $j$-th process.
Note in QA that $\sigma_m$ is interacted with $\sigma_{m-1}$ and $\sigma_1$ (i.e., $\sigma_{m+1}=\sigma_1$), which is mathematically derived from the QA framework (Theorem \ref{theorem:qast}). During iterations, we control the hyper-parameters.}
\label{qa_exp}
\end{figure}

QA is a parallelized extension of SA in which quantum fluctuation is induced by running multiple SAs with interactions.
Let us consider running $m$ SAs, and let $\sigma_j~(j=1,\cdots,m)$ indicate a state (e.g., a class assignment) of $N$ data points in the $j$-th simulation.
In the CRP, we formulate $\sigma_j$ as a table-seating arrangement of customers in the $j$-th CRP (see	Fig. \ref{qa_exp}).
We denote $N$ data points as $\bx=(x_1,x_2,\cdots,x_N)$.
The log-likelihood model given $\sigma_j$ (i.e., $\log p(\bx,\sigma_j)$) is based on the way the data is modeled.
For simplicity, we denote the log-likelihood by $\log p(\sigma_j)$.
In this work, we use the Newman model \citep{newman:2007} for clustering network data (explained in Sec.\ref{experiments}).
QA runs multiple {\it dependent} SAs with {\it dependent} here meaning that there is interaction $f$ among neighboring SAs (see right-hand panel in Fig. \ref{qa_exp}).

We describe QA in terms of an optimization problem.
When we run $m$ SAs with different random initializations independently, we optimize $\log p(\sigma_j)$ individually.
That is, we find  $\sigma_j^* = \argmax_{\sigma_j} \log p(\sigma_j)$ for each $j$ and we choose $\sigma$ that has the highest $\log p(\sigma_j^*)$ of all $j$.
In QA, we optimize the joint probability of $m$ CRPs' states $\{\sigma_j\}_{j=1}^m$:
\begin{align}
\max_{(\sigma_1,\sigma_2,\cdots,\sigma_m)} \log p_{\rm QA}(\{\sigma_1,\sigma_2,\cdots,\sigma_m\}), \label{qa_opt}
\end{align}
where $p_{\text{QA}}(\cdot)$ is a probability measure over a set of states, which means that each state $\sigma_j~(j=1,\cdots,m)$ can take an independent state and QA gives the probability for these states.
A set of states $(\sigma_1,\sigma_2,\cdots,\sigma_m)$ represents (quantum) superposition of different states.
That is,  $p_{\text{QA}}(\cdot)$ is a probability measure over superposition of different states in the limit of $m \rightarrow \infty$ in quantum physics.
In the CRP, $(\sigma_1,\sigma_2,\cdots,\sigma_m)$ represents a superposition of $m$ seating arrangements.
The optimization problem \eqref{qa_opt} is actually formulated as
\begin{align}
\max_{(\sigma_1,\sigma_2,\cdots,\sigma_m)} \sum_{j=1}^m \log p_{\rm SA}(\sigma_j) + f \cdot R(\sigma_1,\sigma_2,\cdots,\sigma_m), \label{qa_opt2}
\end{align}
where the first term corresponds to the summation over $m$ SA objectives, and $R(\cdot)$ is regarded as a regularizer among $m$ states, which are described in Sec.\ref{seq:approximation:qacrp}.
This optimization is derived from the QA framework explained in Sec. \ref{sec:qacrp}.
QA was recently used for solving practical optimization problems, such as clustering \citep{Kurihara:UAI2009} and variational Bayes inference \citep{Sato:UAI2009}, and it outperformed SA.
Figure \ref{target} summarizes QA and related work.

\begin{figure}[t!]
\begin{center}
\includegraphics[width=8cm]{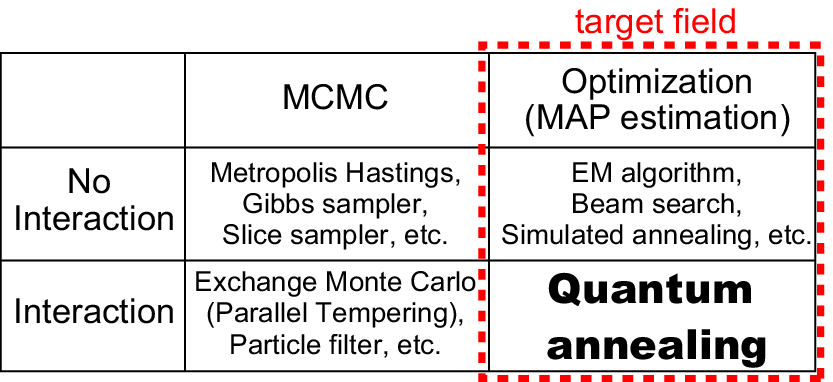}
\end{center}
\caption{QA and related algorithms. We categorize the algorithms into two groups according to their purposes: MCMC and optimization.
MCMC aims at approximating the expectation by a finite sum of samples drawn from the posterior distribution and therefore needs a large number of iterations.
Optimization algorithms search for optimal solutions in a small number of iterations.
The algorithms are also classified according to the presence of interactions among multiple processes. Note that expectation maximization (EM) algorithms \citep{DempsterLairdRubin77} and variational inferences \citep{Blei05,Kurihara07IJCAI} cannot be applied to the CRP.}
\label{target}
\end{figure}

{\bf Problems:}
Existing approaches \citep{Kurihara:UAI2009,Sato:UAI2009} cannot be applied to the CRP because they need a fixed number of mixture components.
Moreover, these approaches have to use a heuristic such as purity to
apply their QA algorithms to the clustering problem.
Therefore, a different formulation is needed.

{\bf Contributions:}
The purpose of this study is to propose a QA algorithm for the CRP.
The key point is how to represent the states of data in the CRP.
The existing work represents the data states  as ``which class a data point is assigned to.''
That is, they require $K$-dimensional indicator vectors to represent the data states, and $K$ (the number of classes) is given and fixed.
We instead represent the states of data as ``which data points a data point shares the table with'' in the CRP.
That is, we use an $N$-by-$N$ bit matrix to represent the data states, and this matrix indicates a seating arrangement in the CRP and does not depend on $K$.
This bit-matrix representation of the CRP is a novel idea and a key point in applying QA to the CRP.
Note that the bit matrix is only used for mathematically deriving QA for the CRP and is not used in the actual algorithm.
Mathematically, this novelty appears in interaction function $f$ in Eq.\eqref{eq:f}, where our derived $f$ does not include the number of classes, $K$, whereas $f$ in existing work is formulated by using (fixed) $K$.
Moreover, our algorithm does not require  heuristics such as purity.
We also use parallel processing in QA, whereas \cite{Kurihara:UAI2009} and \cite{Sato:UAI2009} used a single processing.

The idea of using a matrix that represents the relationship among data points has been used in several studies \citep{KrzyzakOja:NN1993,FreyDueck:Science2007,WangLai:NC2011}.
These studies used a similarity matrix based on the feature vectors among data points.
For example, $x_i$ and $x_j$ denote the feature vectors of the i-th and j-th data points and the similarity was calculated by using the Gaussian kernel between $x_i$ and $x_j$. The formulation in this paper is different from these existing approaches because we do not use the similarity matrix based on the feature vectors. We used the matrix of data points to formulate the clustering state of data in a learning process.
For example, a bit-matrix in this study changes in a learning process, while the similarity (kernel) matrix in the existing works does not change but is fixed. Moreover, we use a bit-matrix only for mathematically deriving QA for the CRP.
We do not directly use the matrix in an actual algorithm, whereas the existing approaches directly use the similarity matrix for clustering data.


%
\begin{figure*}[t!]
\begin{center}
\includegraphics[width=14cm]{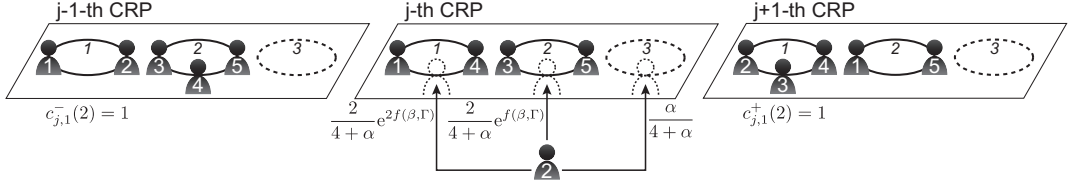}
\end{center}
\caption{
Example of approximation inference in QACRP where $\beta=m$.
Let us consider adding customer $2$ to restaurant $j$ ($j$-th CRP).
The classical CRP seats customers at existing tables in proportion to the number of customers already seated (see Eq.\eqref{eq:py:crp}). The QACRP sampler derived in Eq.\eqref{eq:py:qacrp} introduces the effect of customers who share tables with customer $2$ in the ($j-1$)-th and ($j+1$)-th CRPs.
In this case, since customers $1$, $3$, and $4$ are customers who share tables with customer $2$ in the ($j-1$)-th and ($j+1$)-th CRPs, the $1$st table  in the $j$-th CRP has these ``two'' customers and so takes the effect $e^{\text{``}2\text{''} f(\beta,\Gamma)}$ into account ($c_{j,1}^{-}(2)+c_{j,1}^{+}(2)=2$ in Eq. \eqref{eq:py:qacrp}).
That is, in the QACRP sampler, when interaction $f(\beta,\Gamma)$ is large, a customer tends to sit with customers sharing the table in other CRPs.
}
\label{fig:qacrp}
\end{figure*}

\section{Chinese Restaurant Process (CRP)}
The CRP is a distribution over partitions such as clustering and is composed of three elements: a customer, table, and restaurant.
In a clustering problem, the customer denotes a data point and the table denotes a data class.
A seating arrangement of customers in a restaurant indicates a class assignment of data.
In QA, we run multiple CRPs, i.e., we consider the seating arrangements in multiple restaurants.

The CRP assigns a probability for the seating arrangement of the customers in which
$\bZ=\{z_{i}\}_{i=1}^{N}$ denotes the seating arrangement of the customers and
$z_{i}=k$ indicates that customer $i$ sits at the $k$-th table.
$N$ indicates the number of customers.
When customer $i$ enters a restaurant with $K$ occupied tables at which other customers are already seated, customer $i$ sits at a table with the following probability:
{\small
\begin{align}
&p(z_{i}|\bZ  \backslash {z_i};\alpha) \propto
\left \{
\begin{array}{ll}
 \displaystyle \frac{N_{k}}{\alpha+N-1}&(\text{$k$-th occupied table}),\\
 \displaystyle \frac{\alpha}{\alpha+N-1}&( \text{new unoccupied table}),
\end{array}
\right . \label{eq:py:crp}
\end{align}
}
where $N_{k}$ denotes the number of customers sitting at the $k$-th table, and $\alpha$ is the hyper parameter of the CRP.
A customer tends to select a new table when $\alpha$ takes large value.

The log-likelihood of $\bZ$ is given by
%
$
p(\bZ)=\frac{\alpha^{K(\bZ)}}{\prod_{l=1}^N (N-l+\alpha)}\prod_{k=1}^{K(\bZ)}(N_k-1)!,
$
where $K(\bZ)$ is the number of occupied tables in $\bZ$.


\section{Quantum Annealing for CRP}\label{sec:qacrp}
This section explains how we derive QA for the CRP (QACRP).
First, we introduce some notations and explain QACRP intuitively.
Second, we introduce a bit matrix to reformulate the CRP for using QA independent of the number of classes, which is a key idea in this work.
Third, we formulate the CRP by using a ``density matrix'' that is a basic formulation in quantum mechanics.
Finally, we apply the Suzuki-Trotter expansion \citep{Trotter59,Suzuki76} to approximate QACRP	because the first-derived QACRP is intractable because of the computational cost of a matrix exponential.

\subsection{Main result (See Fig. \ref{fig:qacrp} for an intuitive image)}
QACRP uses multiple restaurants.
$z_{j,i}=k$ indicates that customer $i$ sits at the $k$-th table in the $j$-th restaurant.
$\bZ_j=\{z_{j,i}\}$ denotes the seating arrangements of customers in the $j$-th restaurant.
$c_{j,k}^{+}(i)$ denotes the number of customers who sit at the $k$-th table in the $j$-th restaurant and share tables with  customer $i$ in the ($j+1$)-th restaurant.
$c_{j,k}^{-}(i)$ denotes the number of customers who sit at the $k$-th table in the $j$-th restaurant and share tables with  customer $i$ in the ($j-1$)-th restaurant.
Customer $i$ sits at a table in the $j$-th restaurant with the following probability:
\begin{align}
&p_\text{QA}(z_{j,i}|\{\bZ_d\}_{d=1}^{m}  \backslash \{z_{j,i}\};\beta, \Gamma) \propto \nn\\
&\begin{cases}
 \displaystyle \left ( \frac{N_{j,k}}{\alpha+N-1}\right )^{\frac{\beta}{m}} e^{(c_{j,k}^{-}(i)+c_{j,k}^{+}(i))f(\beta,\Gamma)}\\
 ~~~~~~~~~~~~~~~~~~~~~~~~~~~~~~~~~~~~(\text{$k$-th occupied table}),\\
 \displaystyle \left ( \frac{\alpha}{\alpha+N-1} \right )^{\frac{\beta}{m}}
 ~~~~~~~~~~~~~~( \text{new unoccupied table}),
\end{cases}
\label{eq:py:qacrp}
\end{align}
where $N_{j,k}$ denotes the number of customers sitting at the $k$-th table in the $j$-th restaurant.
$f(\beta,\Gamma)$ is derived in Sec.\ref{seq:approximation:qacrp} Eq.\eqref{eq:f} where $\beta$ and $\Gamma$ are hyper parameters that are called inverse temperature and quantum effect, respectively.
The inverse temperature is also the hyper parameter of SA.
When you change the CRP in Eq.(\ref{eq:py:crp}) into QACRP in Eq.(\ref{eq:py:qacrp}), all you have to do is to count the customers sharing tables in neighboring CRPs and introduce $f(\beta,\Gamma)$.
Figure \ref{fig:qacrp} shows an example of QACRP and provides an intuitive explanation.
When $f(\beta,\Gamma)=0$, QACRP is equivalent to $m$ independent CRPs with inverse temperature $\beta/m$, which we call SACRPs.
We provide the details of the derivation in the next sections.


\begin{figure*}[t!]
\begin{center}
$
\includegraphics[width=14cm]{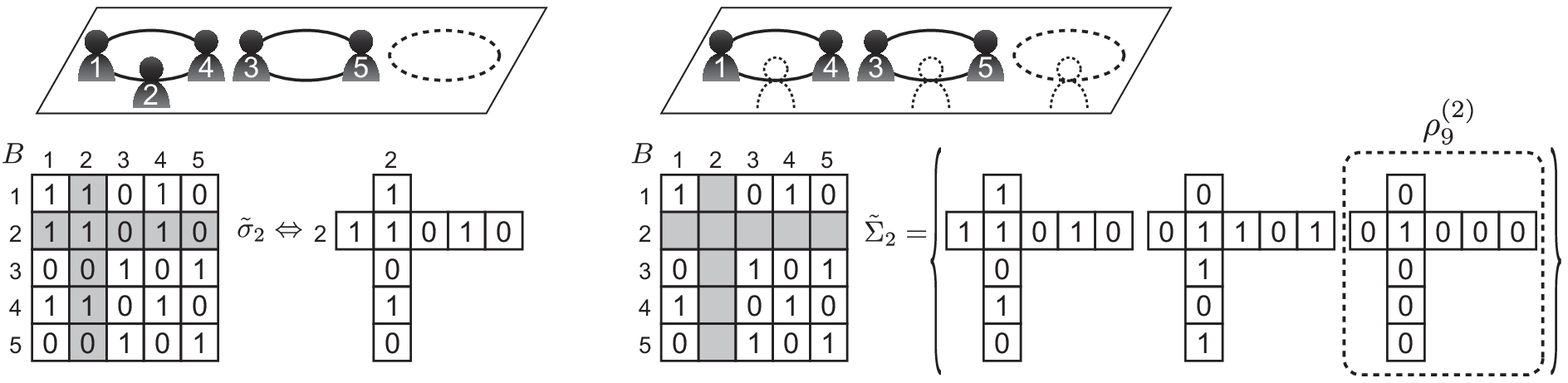}
$
\end{center}
\caption{Example of bit matrix representation.
A seating arrangement $\bZ$ is represented as a bit matrix $\bB$, which enables us to formulate the CRP without fixing the number of tables.
For example, $\tilde \sigma_{2}$ represents customers who share a table with customer $2$.
$\tilde \Sigma_{2}$ represents a set of the states that customer $2$ can take under the seating conditions, and
$\rho_{9}^{(2)}$ indicates that customer $2$ sits alone at a table.
} \label{fig:bm_rep}
\end{figure*}

\subsection{Bit matrix representation for CRP}

%

We represent seating arrangement $\bZ$ by using a bit matrix $\bB$ in order to reformulate the CRP without fixing the number of tables (see Contributions in Sec. \ref{sec:intro}).
Although this bit matrix representation seems to have high computational complexity,  in an actual algorithm of QA, we do not need the direct calculation to the bit matrix.

A bit matrix $\bB$ looks like an adjacency matrix of customers (see Fig. \ref{fig:bm_rep}) and
$\bB$  denotes an $N$-by-$N$ bit matrix where $B_{i}$ is the $i$-th row vector, i.e., $B_{i}=(B_{i,1},B_{i,2},\cdots,B_{i,N})$,
and $B_{i,n}$ is the $i$-th row and the $n$-th column element of $\bB$ or the $n$-th element of $B_i$.
$\tilde \sigma_{i,n}~(i,n=1,\cdots,N)$ is a two-dimensional indicator vector, i.e., it takes $(1,0)^\top$ or $(0,1)^\top$
We correspond $B_{i,n}=1$ to $\tilde \sigma_{i,n}=(1,0)^{\top}$ and $B_{i,n}=0$ to $\tilde \sigma_{i,n}=(0,1)^{\top}$, which means we can represent $\bB$ by using the $2^{N^2}$ dimensional indicator vector, $\sigma$, as follows:
\begin{align}
\bB \Leftrightarrow \sigma=\bigotimes_{i=1}^{N}\bigotimes_{n=1}^{N} \tilde \sigma_{i,n}.\label{def:B}
\end{align}
$\bigotimes$ is the Kronecker product, which is a special case of a tensor product.
If  $A$ is a $k$-by-$l$ matrix and $B$ is an $m$-by-$n$ matrix,
then  the Kronecker product $A \bigotimes B$ is the following $km$-by-$ln$ block matrix:
$A \bigotimes B =\left ( \begin{matrix} a_{11} B & \cdots & a_{1l}B \\ \vdots & \ddots & \vdots \\ a_{k1} B & \cdots & a_{kl} B \end{matrix} \right )$.
For example, $(1,0)^\top \bigotimes  (0,1)^\top =(0,1,0,0)^\top$.
$\Sigma$ denotes a set of $\sigma$, i.e., $|\Sigma|=2^{N^2}$.

The bit matrix $\bB$ is regarded as a seating arrangement as follows.
If $B_{i,n}=1$, the $i$-th and the $n$-th customers share a table.
Note that we need the following conditionals to represent seating arrangements with the bit matrix
\begin{enumerate}
\item $B_{i,n}=B_{n,i}$ (symmetric matrix)
\item $B_{i,i}=1(i=1,2,\cdots,N)$, i.e., $\Tr{B}=N$
\item $\forall$ $i$ and $l$,  $\frac{B_{i}}{|B_{i}|}\cdot \frac{B_{l}}{|B_{l}|}= 1$ or $0$, where $\cdot$ is the inner product.
\end{enumerate}
$\Tr$ $X$ is the trace of $X$.
$\tilde \Sigma(\subset \Sigma)$ denotes a set of $\sigma$ corresponding to $\bB$ satisfying the above conditions.
We call these conditions ``seating conditions.''

Here, $\tilde \sigma_{i}$ indicates the state of the $i$-th customer, i.e., with whom the $i$-th customer shares a table (see the left-hand side of Fig. \ref{fig:bm_rep}) and
$\tilde \sigma_{i}$ is a ($2N-1$ )-dimensional indicator vector given by
\begin{align}
\tilde \sigma_{i}=\bigotimes_{n=1}^{i-1}\tilde \sigma_{i,n} \otimes \tilde \sigma_{i,i} \otimes \bigotimes_{n=i+1}^{N}\tilde \sigma_{i,n} \bigotimes_{n=1,n\not =i}^{N} \tilde \sigma_{n,i}.
\end{align}

Let $\tilde \Sigma_{i}$ be a set of the states that $\tilde \sigma_{i}$ can take under the seating conditions (i.e., $\sigma \in \tilde \Sigma$) when the $i$-th row elements and the $i$-th column elements are blank and the others are filled (see the right-hand side of Fig. \ref{fig:bm_rep}).
Since $\tilde \Sigma_{i}$ is a set of table-assignment states of the $i$-th customer, $|\tilde \Sigma_{i}|=K(\bZ \backslash \{z_i\})+1$.
For example, the right-hand side of Fig. \ref{fig:bm_rep} shows table assignments of the $2$nd customer when customers $1$,$3$,$4$, and $5$ have already been seated.

$\rho_{2N-1}^{(i)}$ is defined as a $2N-1$ dimensional indicator vector given by
\begin{align}
&\rho_{2N-1}^{(i)}=\nn\\
&\bigotimes_{n=1}^{i-1}(0,1)^{\top} \otimes (1,0)^{\top} \otimes \bigotimes_{n=i+1}^{N}(0,1)^{\top} \bigotimes_{n=1,n\not =i}^{N} (0,1)^{\top}.
\end{align}
The right-hand side of Fig. \ref{fig:bm_rep} shows an example of $\rho_{2N-1}^{(i)}$.
We use $\rho_{2N-1}^{(i)}$ only in \ref{proof}.

\subsection{Density matrix representation for classical CRP}
We define the energy function $E$ over $\sigma^{(l)}\in \Sigma ~~(l=1,\cdots,2^{N^2})$ by
$E(\sigma^{(l)})= - \log p(\sigma^{(l)})$,
where if $\sigma^{(l)} \in \Sigma \backslash \tilde \Sigma$, then $p(\sigma^{(l)})=0$, i.e., $E(\sigma^{(l)})=+\infty$.

The probability of a state $\sigma (\in \Sigma)$ is given by
\begin{align}
  p(\sigma) = \frac{1}{Z} \sigma^{\top}e^{-\cH_\text{c}}\sigma,
\end{align}
where  $\cH_\text{c}= {\rm diag} \left[ E(\sigma^{(1)}),E(\sigma^{(2)}),\cdots,E(\sigma^{(2^{N^2})}) \right]$, and ${\rm diag}[\cdot]$ denotes a diagonal matrix.
Note that $Z=\sum_{\sigma} \sigma^{\top}e^{-\cH_\text{c}}\sigma = \Tr {e^{-\cH_\text{c}}}$, where $\cH_\text{c}$ is called the classical Hamiltonian.
If $\sigma \in \tilde \Sigma$, then $p(\sigma)$ is equal to $p(\bZ)$, i.e., $p(\sigma)$ is the probability over a seating arrangement.
Since $\cH_\text{c}$ is diagonal, $e^{-\cH_\text{c}}$ is also diagonal with the $k$-th diagonal element  $e^{-E(\sigma^{(k)})}$.
That is, $p(\sigma^{(k)})=\frac{1}{Z} \sigma^{(k)\top}e^{-\cH_\text{c}}\sigma^{(k)}=\frac{1}{Z}e^{-E(\sigma^{(k)})}$.

\subsection{Formulation for quantum CRP}
The basic approach to expanding a classical system to a quantum one is to make the Hamiltonian non-diagonal, i.e., add some off-diagonal elements while keeping hermiticity.
We define a non-diagonal matrix $\cH$ by
\begin{align}
\cH = \cH_\text{c} + \cH_\text{q},
\end{align}
where $\cH_{\text{q}}$ is a non-diagonal matrix (we describe the definition of $\cH_\text{q}$ later).
Intuitively,  diagonal elements are filled with zero, and some off-diagonal elements are filled with $\Gamma$ in $\cH_{\text{q}}$.
That is, $\cH$ is filled with energy $E(\sigma)$ in diagonal elements and quantum effect $\Gamma$ in off-diagonal elements.
The above scheme that adds a non-diagonal matrix ($\cH_{\text{q}}$) to a diagonal matrix ($\cH_{\text{c}}$) is a basic approach in quantum physics and has also worked well in \citep{Kurihara:UAI2009,Sato:UAI2009}.
The meaning of this formulation was described in \citep{Sato:UAI2009} in terms of uncertainty.

The probability of a state $\sigma (\in \Sigma)$ in a quantum system is given by
\begin{align}
p_\text{QA}(\sigma;\beta, \Gamma) = \frac{1}{Z} \sigma^{\top}e^{-\beta(\cH_\text{c}+\cH_\text{q})}\sigma,\label{eq:p_qa}
\end{align}
where $Z=\sum_{\sigma} \sigma^{\top}e^{-\beta(\cH_\text{c}+\cH_\text{q})}\sigma$=$\Tr [ e^{-\beta(\cH_\text{c}+\cH_\text{q})} ]$.

The optimization problem
\begin{align}
\max_{\sigma}~\log p_\text{QA}(\sigma;\beta, \Gamma) \label{qopt}
\end{align}
could be solved by using the eigenvalue decomposition  of  the density matrix $\frac{1}{Z} e^{-\beta(\cH_\text{c}+\cH_\text{q})}$, but, this approach is intractable because of its large computational cost.

One approximation approach for solving the optimization problem \eqref{qopt} is a stochastic search by drawing a state of the $i$-th customer, $\tilde \sigma_i$, from
\begin{align}
&p(\tilde \sigma_i|\sigma \backslash \tilde \sigma_i) = \frac{\sigma^{\top}e^{-\cH_\text{c}}\sigma}{\sum_{\tilde \sigma_i} \sigma^{\top}e^{-\cH_\text{c}}\sigma},\label{eq:gibbs:dm-crp-sa}\\
&p_\text{QA}(\tilde \sigma_i|\sigma \backslash \tilde \sigma_i;\beta,\Gamma) = \frac{\sigma^{\top}e^{-\beta(\cH_\text{c}+\cH_\text{q})}\sigma}{\sum_{\tilde \sigma_i} \sigma^{\top}e^{-\beta(\cH_\text{c}+\cH_\text{q})}\sigma},\label{eq:gibbs:dm-crp-qa}
\end{align}
where $\sigma \backslash \tilde \sigma_i$ indicates  that bits excluding the $i$-th row and the $i$-th column elements are standing.
The summation over $\tilde \sigma_i$ is actually the summation of $\tilde \sigma_i \in \tilde \Sigma_i$; therefore, the classical system $p(\tilde \sigma_i|\sigma \backslash \tilde \sigma_i)$ is tractable when $p(\tilde \sigma_i|\sigma \backslash \tilde \sigma_i)$ in Eq. \eqref{eq:gibbs:dm-crp-sa} is another expression of Eq. \eqref{eq:py:crp}.
Calculation of the probability of the quantum system $p_\text{QA}(\tilde \sigma_i|\sigma \backslash \tilde \sigma_i;\beta,\Gamma)$, however, is intractable because of the exponential operation of a non-diagonal matrix $\cH = \cH_\text{c} + \cH_\text{q}$.
We therefore need another approach described in Sec.\ref{seq:approximation:qacrp}.

We define $\cH_\text{q}$ as follows.
\newcommand{\unit}{
\left(
  \begin{array}{cc}
	1 & 0\\
	0 & 1
  \end{array}
   \right)
}
\newcommand{\sigmax}{
\left(
  \begin{array}{cc}
	0 & 1\\
	1 & 0
  \end{array}
   \right)
}
{\small
\begin{align}
  \cH_{\text{q}} =& -\Gamma \sum_{i=1}^N \sum_{n=1}^N  \sigma_{i,n}^x,~\bbE = \unit,~\sigma^x = \sigmax,\nn
\end{align}
\begin{align}
  \sigma_{i,n}^x =&
   \left(\bigotimes_{t=1}^{i-1} \bigotimes_{u=1}^{N} \bbE \right)
  \otimes
  \left [
  \left(\bigotimes_{t=1}^{n-1} \bbE \right)
  \otimes
  \sigma^x
  \otimes
  \left(\bigotimes_{t=n+1}^N \bbE \right)
  \right ]
  \nn\\
  &\otimes
   \left(\bigotimes_{t=i+1}^{N} \bigotimes_{u=1}^{N} \bbE \right),
\end{align}
}
where $\Gamma$ is the quantum effect parameter.
This formulation means that diagonal elements are filled with zeros, and some off-diagonal elements are filled with $\Gamma$ in $\cH_{\text{q}}$.
Although other definitions can be considered, we define this formulation so that we can make the derivation of the search algorithm tractable by using an approximation method that is easy to implement.

\subsection{Approximation inference for QACRP}\label{seq:approximation:qacrp}
We cannot calculate $p_\text{QA}(\tilde \sigma_i|\sigma \backslash \tilde \sigma_i;\beta,\Gamma)$ in Eq.\eqref{eq:gibbs:dm-crp-qa} because $\sigma^{\top}e^{-\beta \cH}\sigma$ is intractable because of the non-diagonal matrix $\cH$.
We use the Suzuki-Trotter expansion \citep{Trotter59,Suzuki76} to approximate $p_\text{QA}(\tilde \sigma_i|\sigma \backslash \tilde \sigma_i;\beta,\Gamma)$.

We consider multiple running CRPs in which
$\sigma_j(j=1,\cdots,m)$ indicates the seating arrangement of the $j$-th CRP and represents the $j$-th bit matrix $\bB_j$.
We correspond $B_{j,i,n}=1$ to $\tilde \sigma_{j,i,n}=(1,0)^{\top}$ and $B_{j,i,n}=0$ to $\tilde \sigma_{j,i,n}=(0,1)^{\top}$, which means that we can represent $\bB_j$ as $\sigma_j$ by using Eq.\eqref{def:B}.
We derive the following theorem:

\begin{theorem}\label{theorem:qast}
$p_\text{QA}(\sigma;\beta,\Gamma)$ in Eq.\eqref{eq:p_qa} is approximated by the Suzuki-Trotter expansion as follows:
\begin{align}
&p_\text{QA}(\sigma;\beta,\Gamma)= \frac{1}{Z}\sigma^\top e^{- \beta (\cH_\text{c} + \cH_\text{q})} \sigma\nn\\
&  =
  \sum_{\sigma_j(j\geq 2)}p_\text{QA-ST}(\sigma,\sigma_2,\cdots,\sigma_m;\beta,\Gamma)
  + \mathcal{O} \left( \frac{\beta^2}{m} \right),\label{eq:trot_rep}
\end{align}
where we rewrite $\sigma$ as $\sigma_1$, and
\begin{align}
&p_\text{QA-ST}(\sigma_1,\sigma_2,\cdots,\sigma_m;\beta,\Gamma)\nn\\
&= \prod_{j=1}^m \frac{1}{Z(\beta, \Gamma)}
  e^{- \frac{\beta}{m} E(\sigma_j)}
  e^{f(\beta,\Gamma)s(\sigma_j,\sigma_{j+1}) },\label{eq:p_qast}
\\
&f(\beta,\Gamma)=2 \log \coth \left(\frac{\beta}{m}\Gamma \right),\label{eq:f}\\
&  s(\sigma_j,\sigma_{j+1}) = \sum_{i=1}^{N}\sum_{n=1}^{N}\delta( \tilde{\sigma}_{j,i,n},\tilde{\sigma}_{j+1,i,n}),\label{eq:s}\\
&  Z(\beta, \Gamma)=\left[ \sinh \left(\frac{\beta}{m}\Gamma \right)\right]^{2N}\sum_{\sigma}e^{- \frac{\beta}{m} E(\sigma)}.
\end{align}
\end{theorem}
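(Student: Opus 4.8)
The plan is to establish \eqref{eq:trot_rep} by the standard Suzuki--Trotter decomposition of the operator $e^{-\beta(\cH_\text{c}+\cH_\text{q})}$, exploiting that $\cH_\text{c}$ is diagonal in the $\sigma$-basis while $\cH_\text{q}$ is a sum of single-bit flip operators. First I would split the exponential into $m$ identical slices and invoke the first-order Trotter product formula,
\begin{align}
e^{-\beta(\cH_\text{c}+\cH_\text{q})} = \left( e^{-\frac{\beta}{m}\cH_\text{c}} e^{-\frac{\beta}{m}\cH_\text{q}} \right)^m + \mathcal{O}\!\left(\frac{\beta^2}{m}\right),\nn
\end{align}
where the error is controlled by the Baker--Campbell--Hausdorff expansion: each slice introduces a correction of order $(\tfrac{\beta}{m})^2[\cH_\text{c},\cH_\text{q}]$, and summing over the $m$ slices leaves a residual of order $\beta^2/m$ precisely because $\cH_\text{c}$ and $\cH_\text{q}$ do not commute.

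Next I would insert $m-1$ resolutions of the identity $\sum_{\sigma_j}\sigma_j\sigma_j^{\top}=I$ between the slices, writing $\sigma=\sigma_1$ and using the closure $\sigma_{m+1}=\sigma_1$ forced by the outer sandwich $\sigma^{\top}(\cdots)\sigma$. This rewrites $\sigma^{\top}e^{-\beta\cH}\sigma$ as
\begin{align}
\sum_{\sigma_2,\dots,\sigma_m}\ \prod_{j=1}^{m}\sigma_j^{\top}\,e^{-\frac{\beta}{m}\cH_\text{c}}\,e^{-\frac{\beta}{m}\cH_\text{q}}\,\sigma_{j+1},\nn
\end{align}
which already displays the ring of interacting Trotter slices of Fig.~\ref{qa_exp}. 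Since $\cH_\text{c}$ is diagonal, each $\sigma_j$ is an eigenvector and contributes the classical factor $e^{-\frac{\beta}{m}E(\sigma_j)}$, recovering the first exponential in \eqref{eq:p_qast}.

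The remaining work is the matrix element $\sigma_j^{\top}e^{-\frac{\beta}{m}\cH_\text{q}}\sigma_{j+1}$. Because $\cH_\text{q}=-\Gamma\sum_{i,n}\sigma^x_{i,n}$ is a sum of mutually commuting single-bit Pauli-$x$ operators, its exponential factorizes over bits, and each factor is the $2\times2$ matrix $\cosh(\tfrac{\beta}{m}\Gamma)\,\bbE+\sinh(\tfrac{\beta}{m}\Gamma)\,\sigma^x$, whose entry equals $\cosh(\tfrac{\beta}{m}\Gamma)$ when the two bits agree and $\sinh(\tfrac{\beta}{m}\Gamma)$ when they differ. The full matrix element is therefore a product of such factors over all bits; writing $\coth=\cosh/\sinh$ converts the agreement count into the Ising-type weight $e^{f(\beta,\Gamma)s(\sigma_j,\sigma_{j+1})}$ built from $s$ in \eqref{eq:s}, while the residual $\sinh$ prefactor together with $\sum_{\sigma}e^{-\frac{\beta}{m}E(\sigma)}$ is folded into $Z(\beta,\Gamma)$. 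Verifying this reorganization reproduces \eqref{eq:p_qast}.

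I expect the delicate step to be the exact bookkeeping of the constants rather than the decomposition itself. The seating conditions---symmetry $B_{i,n}=B_{n,i}$ and the fixed diagonal $B_{i,i}=1$---mean the $N^2$ bits are not independent: off-diagonal bits flip in symmetric pairs and diagonal bits never flip. Tracking this pairing is exactly what produces the factor $2$ in $f(\beta,\Gamma)=2\log\coth(\tfrac{\beta}{m}\Gamma)$ and the power $2N$ in $Z(\beta,\Gamma)$, and it also forces any intermediate $\sigma_j\notin\tilde\Sigma$ to drop out, since its neighbour then carries $E=+\infty$ and hence zero weight. Making the Trotter error genuinely $\mathcal{O}(\beta^2/m)$ and matching these constraint-induced constants is where I would concentrate the careful argument.
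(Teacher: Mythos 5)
Your proposal follows essentially the same route as the paper's own proof: the first-order Trotter splitting into $m$ slices, insertion of resolutions of the identity to form the ring of replicas with $\sigma_{m+1}=\sigma_1$, the diagonal lemma for $e^{-\frac{\beta}{m}\cH_\text{c}}$, and the bitwise factorization $e^{\frac{\beta}{m}\Gamma\sigma^x}=\cosh(\tfrac{\beta}{m}\Gamma)\,\bbE+\sinh(\tfrac{\beta}{m}\Gamma)\,\sigma^x$ that yields the $\coth$-weighted agreement count. Your instinct to concentrate on the constants is well placed: a naive product over all $N^2$ bits gives a $[\sinh(\tfrac{\beta}{m}\Gamma)]^{N^2}$ prefactor and $f=\log\coth(\tfrac{\beta}{m}\Gamma)$, whereas the theorem states $[\sinh(\tfrac{\beta}{m}\Gamma)]^{2N}$ and $f=2\log\coth(\tfrac{\beta}{m}\Gamma)$, and the paper's Lemma \ref{lemma:trace_quantum} asserts these without the symmetric-pair and fixed-diagonal bookkeeping you identify, so that is precisely the step where your careful argument would add value.
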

Note that $\sigma_{m+1}=\sigma_1$.
The proof is given in \ref{proof}.
Note that our derived $f$ in Eq.\eqref{eq:f} does not include the number of classes, $K$, whereas the $f$ in existing work \citep{Kurihara:UAI2009,Sato:UAI2009} is formulated by using a fixed $K$.

Equation \eqref{eq:trot_rep} is interpreted as follows. $p_\text{QA}(\sigma;\beta, \Gamma)$ is approximated by marginalizing out other states $\{\sigma_j\}_{j \geq 2}$ of $p_\text{QA-ST}(\sigma_1,\sigma_2,\cdots,\sigma_m;\beta,\Gamma)$.
As shown in Eq.\eqref{eq:p_qast}, $p_\text{QA-ST}(\sigma_1,\sigma_2,\cdots,\sigma_m;\beta,\Gamma)$ looks like the joint probability of the states of $m$ dependent CRPs.
In Eq.\eqref{eq:p_qast}, $e^{- \frac{\beta}{m} E(\sigma_j)}$  corresponds to the classical CRP with inverse temperature and $e^{f(\beta,\Gamma)s(\sigma_j,\sigma_{j+1})}$ indicates the quantum effect part.
If $f(\beta,\Gamma)=0$, which means CRPs are independent,
$p_\text{QA-ST}(\sigma_1,\sigma_2,\cdots,\sigma_m;\beta,\Gamma)$ is equal to the products of probability of $m$ classical CRPs.
$s(\sigma_j,\sigma_{j+1}) (>0)$ is regarded as a similarity function between the $j$-th and ($j+1$)-th bit matrices. If they are the same matrices, then $s(\sigma_j,\sigma_{j+1})=N^2$.
 In Eq.\eqref{qa_opt2}, $\log p_{\rm SA}(\sigma_j)$ corresponds to $\log e^{- \frac{\beta}{m} E(\sigma_j)}/Z$ and the regularizer term $f \cdot R(\sigma_1,\cdots,\sigma_m)$ is $\log \prod_{j=1}^m e^{f(\beta,\Gamma)s(\sigma_j,\sigma_{j+1})}=f(\beta,\Gamma)\sum_{j=1}^m s(\sigma_j,\sigma_{j+1})$.

Note that we aim at deriving the approximation inference for $p_\text{QA}(\tilde \sigma_i|\sigma \backslash \tilde \sigma_i;\beta,\Gamma)$ in Eq.\eqref{eq:gibbs:dm-crp-qa}.
Using Theorem \ref{theorem:qast}, we can derive Eq.\eqref{eq:py:qacrp} as the approximation inference.
The details of the derivation are provided in \ref{dev:qacrp}.

\section{Experiments}\label{experiments}
We evaluated QA in a real application.
We applied QA to a DPM model for clustering vertices in a network where a seating arrangement of the CRP indicates a network partition.

\subsection{Network Model}
We used the Newman model \citep{newman:2007} for network modeling in this work.
The Newman model is a probabilistic generative network model.
This model is flexible, which enables researchers to analyze observed graph data without specifying the network structure (disassortative or assortative) in advance.

\begin{figure}[t!]
\begin{center}
\includegraphics[width=8cm]{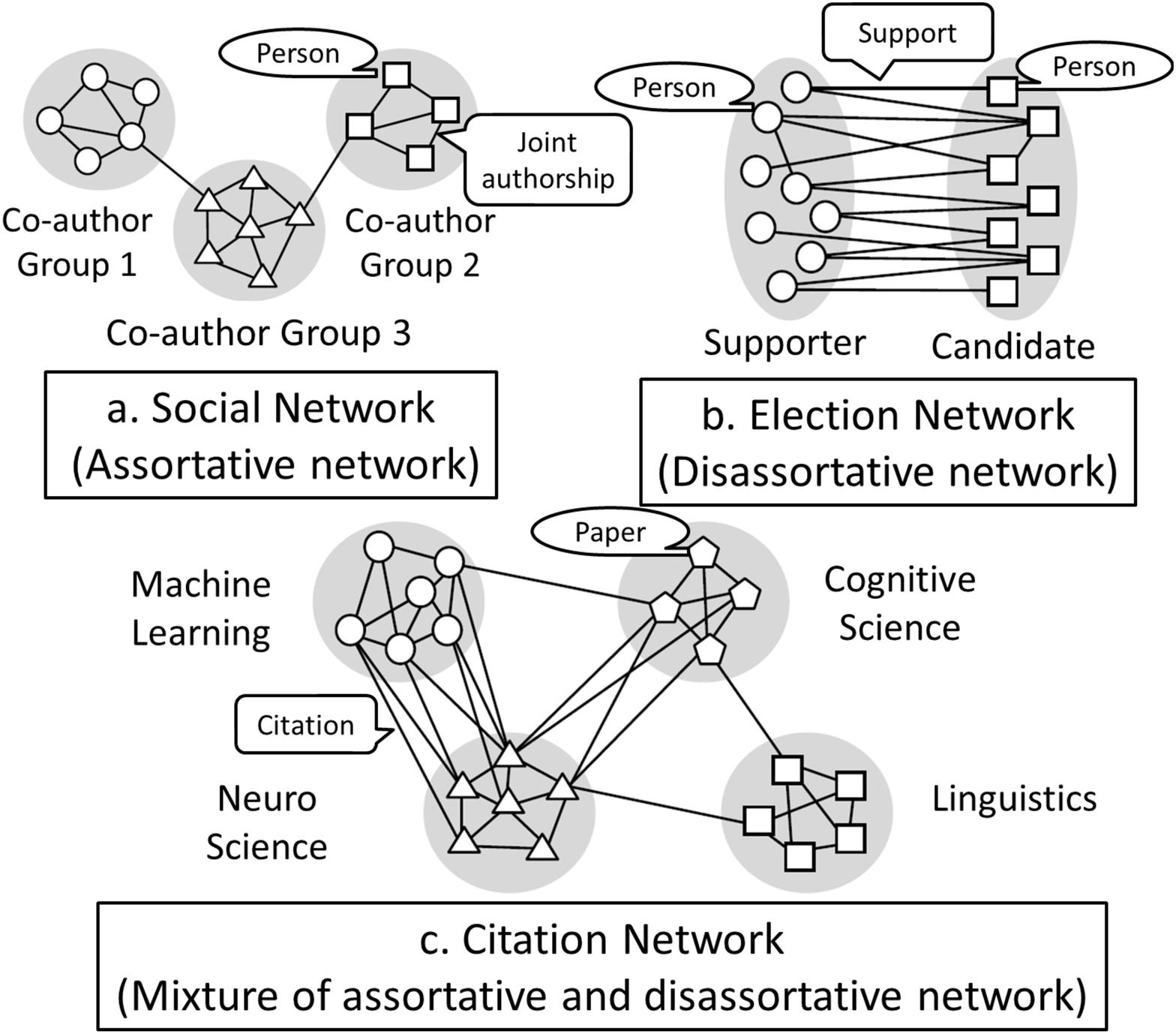}
\end{center}
\caption{Examples of Network structures.}
\label{fig:network}
\end{figure}

In an assortative network, such as a social network, the members (vertices) of each class are mostly connected to the other members of the same class.
The communications between members in three social groups is illustrated in Fig. \ref{fig:network}, where one sees that the members generally communicate more with others in the same group than they do with those outside the group.
In a disassortative network, the members (vertices) have most of their connections outside their class.
An election network of supporters and candidates is illustrated in Fig. \ref{fig:network}-b, where a link indicates support for a candidate.
The Newman model can model not only these two kinds of networks but also a mixture of them, such as a citation network (see Fig.\ref{fig:network}-c),
but, the user must decide in advance the number of classes.
We therefore used the DPM extension of the Newman model as described in \ref{appdx:network_model}.

\subsection{Dataset}
We used three social network datasets, Netscience\footnote{http://www.casos.cs.cmu.edu/computational\_tools\\/datasets/external/netscience/},
Citeseer\footnote{http://www.cs.umd.edu/projects/linqs/projects\\/lbc/index.html}, and Wikivote\footnote{http://snap.stanford.edu/data/wiki-Vote.html}.
Netscience is a coauthorship network of scientists working on a network that has 1,589 scientists (vertices).
Citeseer is a citation network dataset for 2,110 papers (vertices).
Wikivote is a bipartite network constructed  using administrator elections and vote history data in Wikipedia.
Its 7,115 vertices represent Wikipedia users and a directed edge from vertex $i$ to vertex $j$ represents that user $i$ voted for user $j$.
Netscience, Wikivote, and Citeseer respectively correspond to network examples a, b, and c in Fig.\ref{fig:network}.
We used the vertices in these networks to represent customers in the CRP, and
we used a negative log-likelihood as an energy function to find the MAP solution. 

\subsection{Annealing schedule}
We tested several $\beta/m$ schedules using combinations of $\beta_0=0.2m$, $0.4m$, and $0.6m$ and $\beta=\beta_0 \log(1+t)$, $\beta_0 \sqrt{t}$, and $\beta_0 t$, where $t$ denotes the $t$-th iteration.
The results we observed in our experiments showed that $\beta_0=0.4m$ and $\beta_0 \sqrt{t}$ created a better schedule in SA in terms of the MAP estimation.
That is, $\beta$ increases to $\beta_0\sqrt{T}$, where $T$ is the total number of iterations.
In QA, we use the same $\beta/m$ schedule we used in SAs.
Note that since the new table is easy to sample at very small $\beta$ (where the probability distribution becomes flat, see Eq. \eqref{eq:py:qacrp}), the SACRP has many tables at small $\beta$ and converges very slowly.
That is, inverse temperatures that are too low do not work well in the CRP.

Since interaction $f$ is a function of $\Gamma$ and $\beta$, in practice we have to consider the schedule of $f(\beta,\Gamma)$. The interaction $f(\beta,\Gamma)$ increases when $\frac{\beta\Gamma}{m}$ decreases.
QA is known to work well when $f(\beta,\Gamma)$  starts from zero (i.e., ``independent'' multiple SAs) and gradually increases.
This process of $f(\beta,\Gamma)$ is achieved when $\frac{\beta\Gamma}{m}$ is a decreasing function of $t$.
Therefore, we use
\begin{align}
\frac{\beta\Gamma}{m}=\Gamma_0 \frac{T}{t},
\end{align}
where  $\Gamma_0$ is a tuning parameter.

\begin{figure}[t!]
\begin{center}
\includegraphics[width=8cm]{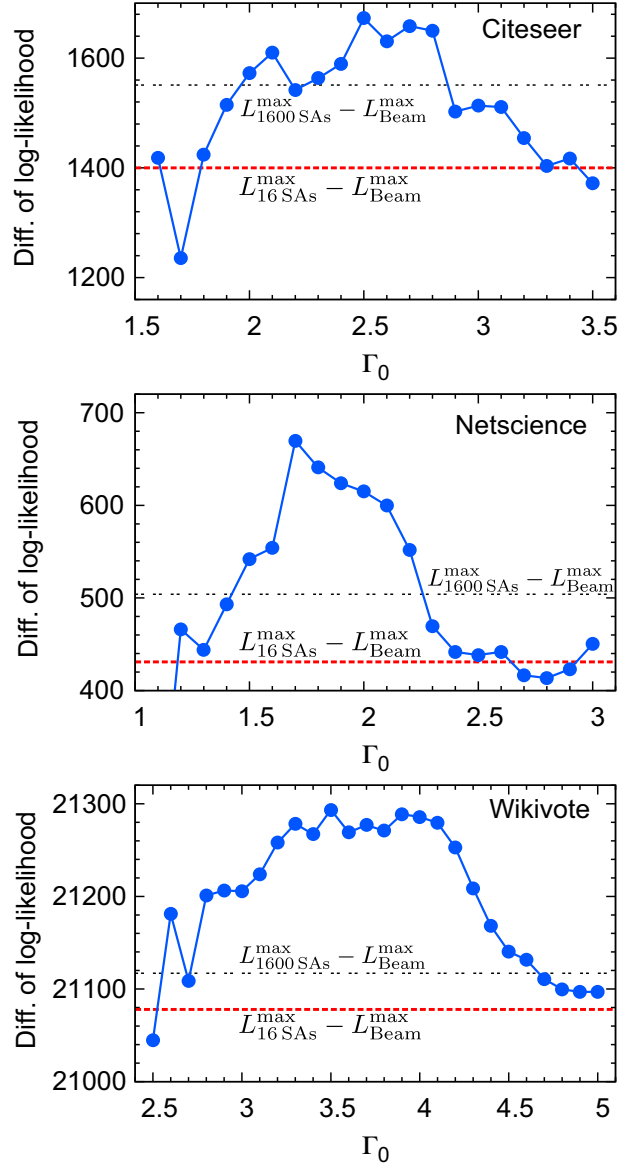}
\end{center}
\caption{
Experimental results for the maximum log-likelihoods (minimum energies) of QA and other algorithms.
Vertical axes show the difference of maximum log-likelihoods.
A higher value is better (closer to optimum states).
We denote $L_{\text{QA}}^{\text{max}}$, $L_{\text{16 SAs}}^{\text{max}}$ as the maximum log-likelihood of 16 CRPs in QA and SA, $L_{\text{1600 SAs}}^{\text{max}}$ as the maximum log-likelihood of 1600 CRPs in  SA, and $L_{\text{Beam}}^{\text{max}}$  as the maximum log-likelihood of the beam search with beam-width = $100$.
The solid line indicates $L_{\text{QA}}^{\text{max}}-L_{\text{Beam}}^{\text{max}}$.
Lower (red) and upper dotted lines indicate $L_{\text{16 SAs}}^{\text{max}}-L_{\text{Beam}}^{\text{max}}$ and $L_{\text{1600 SAs}}^{\text{max}}-L_{\text{Beam}}^{\text{max}}$.
When these lines take positive values, QA and SA outperform the beam search.
Whenever the solid line is higher than the dotted lines QA outperforms SAs.
The horizontal axes are the $\Gamma_0$ for QA .
}
\label{fig:results}
\end{figure}

\subsection{Experimental settings}
Our purpose is to search for a better MAP solution to a CRP in a small number of iterations (or short running time).
We evaluated  optimization algorithms in terms of maximum log-likelihood  because we want a state with the highest log-likelihood.
We compared QA with SA and the beam search.
We used the beam search with an inadmissible score function that achieved the best performance in \citep{Hal:AISTATS2007}.
We set the beam-width to $100$.
We did not  compare the variational inference with QA  because the variational inference cannot deal with the Chinese restaurant formulation of the Dirichlet process mixture.
That is,  it is hard to compare them because their objective functions are different.

Since we used a multi core processer with $16$ cores, we set $m=16$ (i.e., ran one CRP at one core)
We set $\alpha=1$ in the CRP.
$\alpha$ is easy to estimate in SAs and QA, but beam search cannot estimate it; therefore, we fixed it in these experiments.
The number of iterations, $T$, for SAs and QA was $30$.
We generated $16$ random seating arrangements $\{\sigma_j^{(\text{random})}\}_{j=1}^{16}$ for the initial settings  of QA and 16 SAs, i.e., we use $\sigma_j^{(\text{random})}$ for the same initial setting of the $j$-th seating arrangement.
Moreover, we compared QA ($m=16$) with 1600 SAs where their MAP solutions are the best one of 1600 simulations with different random initializations.
In 1600 SAs, we tried $100$ seeds and generated $m=16$ random seating arrangements for each seed, i.e., we ran the CRPs with $100 \times m(=16)=1600$ initial seating arrangements.

\subsection{Results and Discussions}
Figure \ref{fig:results} shows the experimental results.
QA and SAs outperformed the beam search because each line takes a positive value.
QA  finds a better local optimum than that of $16$ and $1600$ SAs at some $\Gamma_0$.
This means that it is useful to run QA with changing $\Gamma_0$ rather than to run multiple SAs.

The effective $\Gamma_0$ has a positive correlation with the number of nodes. For example, the effective $\Gamma_0$ is around $2$ in Netscience and is around $3.5$ in Wikivote, which has more nodes than that Netscience does. This is because the quantum effect term depends on $C \cdot f(\beta,\Gamma)$, where $C$ is the number of customers who share tables and thus depends on the number of customers (nodes). This means that the effective parameter range can be inferred from the number of nodes and we have only to check a few $\Gamma_0$ values.

QA needs more time and memory than SA with the linear order of $m$ because QA uses m CRPs
However, when a parallel processing environment can be used and we run multiple SAs in parallel, the scalability of QA is the same as that of SAs.
QA ($T=30$, $m=16$) and SA ($T=30$, $m=1$) took
about $15$ and $13$ seconds for Netscience,
about $25$ and $22$ seconds for Netscience, and
about $79$ and $76$ seconds for Wikivote, where each value was the averaged running time of a single simulation for SA.
Because of the multi core processing and the caching of customers sharing tables,
the running time of QA was almost the same as that of a single SA.
Therefore, QA makes the CRPs converge faster and finds a better seating arrangement than multiple-run SAs.
The estimated number of classes achieving the best performance in QA ($m=16$), $16$ SAs, $1600$ SAs and the beam search are $26$, $22$, $65$, and $61$ for Netscience,  $37$, $35$, $30$, and $57$ for Citeseer, and $8$, $8$, $8$, and $27$ for Wikivote.

We found that a small $\Gamma_0$ induces a fast schedule of $f$, which means $f \gg 0$ at small $\beta$.
The fast schedules make the convergence of QA too fast; therefore, QA converges at a worse optimum.
QA is similar to SAs at large $\Gamma_0$ because interaction $f$ remains at almost $0$ for a limited number of iterations.
Larger $\Gamma_0$ makes CRPs in QA more independent, which means the results of QA approach those of SA.
We found that interaction $f$ is almost zero when $\Gamma_0=5$ and $T=30$, which means that the performance of QA is similar to that of SA.
Therefore, in practice,  we only check values of $\Gamma_0$ in descending order from a large value of $\Gamma_0$, such as $\Gamma_0=5$.
That is, the effective value range is easy to infer from some $\Gamma_0$ values (in our experimental results, we show some non-effective values in order to provide the negative examples of QA).

\section{Conclusion}
We proposed a  QA algorithm for the DPM models based on the CRP.
Our algorithm is different from those of \cite{Kurihara:UAI2009} and \cite{Sato:UAI2009} in three ways: (i) it can handle an unfixed number of classes in mixture models, (ii) it does not require heuristics such as a purity, and (iii) it uses parallel processing in QA.
The proposed algorithm (Eq. \eqref{eq:py:qacrp}) is easy to implement because it is similar to a classical CRP (Eq. \eqref{eq:py:crp}).
That is, it is easy to apply the proposed algorithm to other nonparametric models with which it is not easy to apply beam search, such as an infinite relational model \citep{Kemp06}.
The proposed algorithm will therefore be a promising new optimization technique when it is used with  rapidly advancing multi core processers.
As shown in Eq.\eqref{qa_opt2}, our algorithm is regarded as an optimization with a regularized term and its performance depends on parameter $f$ like that other optimization algorithms with a regularized term does (e.g., L1 and L2 regularized optimization algorithms often used in machine learning).
For future work, it will be worth analyzing what kind of schedule of $f$ enables QA to work well.

\section*{Acknowledgements}
The present work was partially supported by the Mitsubishi Foundation,
and also by the Next Generation Super Computer Project, Nanoscience Program from MEXT of Japan.
The authors are partially supported by Grand-in-Aid for JSPS Fellows
(23-7601). Numerical calculations were partly performed on supercomputers at the Institute for Solid State Physics, University of
Tokyo.
This research was funded in part by the Joint Usage/Research Center for Interdisciplinary Large-scale Information Infrastructures in Japan.
This work was supported by a JSPS Grant-in-Aid for Young Scientists (B) 24700135.

\appendix
\section{Proof of Theorem \ref{theorem:qast}}\label{proof}

We use the following Trotter product formula \citep{Trotter59} to
approximate
\begin{align}
p_\text{QA}(\sigma;\beta,\Gamma)= \frac{1}{Z}\sigma^\top e^{- \beta (\cH_\text{c} + \cH_\text{q})} \sigma. \label{eq:qa_model}
\end{align}

If $A_1,...,A_L$ are symmetric matrices, we have
\begin{align}
  \exp\left( \sum_{l=1}^L A_l \right)
  \! = \!
  \left[ \prod_{l=1}^L \exp(A_l/m) \right]^{\! m}
  \!\!\!\!\!
  + O\left(  \frac{1}{m}  \right)
  .
\end{align}
Applying the Trotter product formula to Eq.\eqref{eq:qa_model} with
finite $m$, we have
\begin{align}
  p_\text{QA}(\sigma; \beta, \Gamma)
  &
  =
  \frac{1}{Z}
  \sigma^\top e^{- \beta (\cH_\text{c} + \cH_\text{q})} \sigma
  \nn\\
  &
  \approx
  \frac{1}{Z}
  \sigma^\top \left(
  e^{- \frac{\beta}{m} \cH_\text{c}}
  e^{- \frac{\beta}{m} \cH_\text{q}}
  \right)^m
  \sigma
  \label{eq:trotter_expansion}
  .
\end{align}
We evaluate the residual of this approximation.
 Since $e^{A_1+A_2} = e^{A_1} e^{A_2}$ does not hold in
 general\footnote{If $A_1 A_2 = A_2 A_1$, then $e^{A_1 + A_2} = e^{A_1}
   e^{A_2}$.}, we need to use the Trotter product formula for computation.
We rewrite $\sigma$ as $\sigma_1$ and note that $\sigma_1^\top
\!\left(e^A\right)^{\!2}\! \sigma_1 = \sum_{\sigma_2} \sigma_1^\top e^A
\sigma_2 \sigma_2^\top e^A \sigma_1$.  Hence, we express Eq.\eqref{eq:trotter_expansion} by marginalizing out auxiliary variables
$\{\sigma_1',\sigma_2,\sigma_2',...,\sigma_m,\sigma_m'\}$,
\begin{align}
  &
  \sigma_1^\top
  \left(
  e^{- \frac{\beta}{m} \cH_\text{c}}
  e^{- \frac{\beta}{m} \cH_\text{q}}
  \right)^m
  \sigma_1
  \nn\\
  &=
  \sum_{\sigma_1'}
  \sum_{\sigma_2}
  ...
  \sum_{\sigma_m}
  \sum_{\sigma_m'}
  \sigma_1^\top
  e^{- \frac{\beta}{m} \cH_\text{c}}
  \sigma_1'
  \sigma_1'^\top
  e^{- \frac{\beta}{m} \cH_\text{q}}
  \sigma_{2}
  \nn\\
  &\quad\quad\cdots
  \sigma_m^\top
  e^{- \frac{\beta}{m} \cH_\text{c}}
  \sigma_m'
  \sigma_m'^\top
  e^{- \frac{\beta}{m} \cH_\text{q}}
  \sigma_{m+1}
  \nn\\
  &=
  \sum_{\sigma_1'}
  \sum_{\sigma_2}
  ...
  \sum_{\sigma_m}
  \sum_{\sigma_m'}
  \prod_{j=1}^m
  \sigma_j^\top
  e^{- \frac{\beta}{m} \cH_\text{c}}
  \sigma_j'
  \sigma_j'^\top
  e^{- \frac{\beta}{m} \cH_\text{q}}
  \sigma_{j+1}
  ,
  \label{eq:qa_naive_st_expansion}
\end{align}
where $\sigma_{m+1} = \sigma_1$.
To express Eq.\eqref{eq:qa_naive_st_expansion} more particularly, we use the
following Lemma \ref{lemma:trace_classical} and Lemma \ref{lemma:trace_quantum}.

\begin{lemma}
  \label{lemma:trace_classical}
\begin{align}
  \sigma_j^\top
  e^{- \frac{\beta}{m} \cH_\text{c}}
  \sigma_j'
  =&
  \exp\left[- \frac{\beta}{m} E(\sigma_j) \right]
  \delta(\sigma_j, \sigma_j'),
  \label{eq:trace_classical}
\end{align}
where $\delta(\sigma_j, \sigma_j') = 1$ if $\sigma_j=\sigma_j'$ and
$\delta(\sigma_j, \sigma_j') = 0$ otherwise.
\end{lemma}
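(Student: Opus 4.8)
The plan is to reduce the bilinear form $\sigma_j^\top e^{-\frac{\beta}{m}\cH_\text{c}}\sigma_j'$ to a single entry of a diagonal matrix, exploiting the fact that every state vector is a canonical basis vector. First I would recall from Eq.\eqref{def:B} that each $\sigma\in\Sigma$ is built as a Kronecker product $\bigotimes_{i,n}\tilde\sigma_{i,n}$ of two-dimensional indicator vectors, each equal to $(1,0)^\top$ or $(0,1)^\top$. Because the Kronecker product of standard basis vectors is again a standard basis vector of the product space (by the block structure of the Kronecker product), every enumerated state $\sigma^{(l)}$ is the $l$-th canonical basis vector of $\bbR^{2^{N^2}}$. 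In particular the family $\{\sigma^{(l)}\}$ is orthonormal, so $\sigma^{(a)\top}\sigma^{(b)}=\delta(\sigma^{(a)},\sigma^{(b)})$.

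Next I would use diagonality. Since $\cH_\text{c}={\rm diag}\left[E(\sigma^{(1)}),\ldots,E(\sigma^{(2^{N^2})})\right]$ is diagonal, its scaled matrix exponential $e^{-\frac{\beta}{m}\cH_\text{c}}$ is diagonal as well, with $l$-th diagonal entry $e^{-\frac{\beta}{m}E(\sigma^{(l)})}$; this is the same observation already made for the classical density matrix $e^{-\cH_\text{c}}$. Writing $\sigma_j=\sigma^{(a)}$ and $\sigma_j'=\sigma^{(b)}$, the form $\sigma_j^\top e^{-\frac{\beta}{m}\cH_\text{c}}\sigma_j'$ simply picks off the $(a,b)$ entry of this diagonal matrix.

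Finally I would conclude by reading off that entry: it equals $e^{-\frac{\beta}{m}E(\sigma^{(a)})}$ when $a=b$ and vanishes otherwise, which is precisely $\exp\!\left[-\frac{\beta}{m}E(\sigma_j)\right]\delta(\sigma_j,\sigma_j')$. There is no genuine obstacle in this lemma; it is an elementary consequence of the diagonality of $\cH_\text{c}$. The only point deserving a word of care is the first step, verifying that a Kronecker product of indicator vectors is again an indicator vector so that the bilinear form isolates a single matrix entry, and this follows directly from the definition of the Kronecker product.
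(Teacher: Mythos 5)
Your proof is correct and follows essentially the same route as the paper's: you observe that $\cH_\text{c}$ is diagonal so $e^{-\frac{\beta}{m}\cH_\text{c}}$ is diagonal with entries $e^{-\frac{\beta}{m}E(\sigma^{(l)})}$, and that the states are canonical indicator vectors, so the bilinear form isolates a single diagonal entry. Your extra care in verifying via the Kronecker-product structure that each $\sigma$ is a standard basis vector is a welcome (if minor) elaboration of what the paper asserts directly.
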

\begin{proof}
  By the definition, $e^{- \frac{\beta}{m} \cH_\text{c}}$ is diagonal
  with $[e^{- \frac{\beta}{m} \cH_\text{c}}]_{kk} = E(\sigma^{(k)})$,
  and $\sigma_j$ and $\sigma_j'$ are binary indicator vectors,
  i.e. only one element in $\sigma_j$ is one and the others are zero.
  Thus, the above lemma holds.
\end{proof}

\begin{lemma}
  \label{lemma:trace_quantum}
  \begin{align}
    &\sigma_j'^\top
    e^{- \frac{\beta}{m} \cH_\text{q}}
    \sigma_{j+1}
  	= \left [ \sinh \left(\frac{\beta}{m}\Gamma \right)\right]^{2N}\nn\\
  	&~~~\exp \left[\sum_{i=1}^{N}\sum_{n=1}^{N}\delta( \tilde{\sigma'}_{j,i,n},\tilde{\sigma}_{j+1,i,n}) \log \coth \left(\frac{\beta}{m}\Gamma \right)	\right]. \label{lemma:trace_quantum:eq}
\end{align}
\end{lemma}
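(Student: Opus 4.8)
The plan is to exploit that $\cH_\text{q}=-\Gamma\sum_{i=1}^N\sum_{n=1}^N\sigma_{i,n}^x$ is a sum of single-site operators living on distinct tensor factors of the $2^{N^2}$-dimensional space, so that $\sigma_{i,n}^x$ and $\sigma_{i',n'}^x$ commute whenever $(i,n)\neq(i',n')$. Consequently the caveat $e^{A_1+A_2}\neq e^{A_1}e^{A_2}$ does not apply among these terms, and the exponential factorizes exactly as a Kronecker product over the bit positions,
\[
e^{-\frac{\beta}{m}\cH_\text{q}}=\prod_{i=1}^N\prod_{n=1}^N e^{\frac{\beta}{m}\Gamma\,\sigma_{i,n}^x}=\bigotimes_{i=1}^N\bigotimes_{n=1}^N e^{\frac{\beta}{m}\Gamma\,\sigma^x}.
\]

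First I would evaluate the single-site exponential. Since $(\sigma^x)^2=\bbE$, the even/odd split of the power series collapses to
\[
e^{a\sigma^x}=\cosh(a)\,\bbE+\sinh(a)\,\sigma^x,\qquad a:=\tfrac{\beta}{m}\Gamma,
\]
the $2\times2$ matrix with $\cosh a$ on the diagonal and $\sinh a$ off the diagonal. Contracting this against the indicator basis $(1,0)^\top,(0,1)^\top$, the single-bit element built from $\tilde{\sigma'}_{j,i,n}$ and $\tilde\sigma_{j+1,i,n}$ equals $\cosh a$ when the two bits agree and $\sinh a$ when they disagree; compactly it is $(\cosh a)^{\delta}(\sinh a)^{1-\delta}$ with $\delta=\delta(\tilde{\sigma'}_{j,i,n},\tilde\sigma_{j+1,i,n})$.

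Because the exponential is a pure Kronecker product and each $\tilde\sigma$ is a tensor of single-bit indicators, the full matrix element factorizes into the product of these single-bit elements. Collecting the agreeing sites (each giving $\cosh a$) and the disagreeing ones (each giving $\sinh a$) and using $\coth a=\cosh a/\sinh a$ rewrites the product as an overall power of $\sinh(\tfrac{\beta}{m}\Gamma)$ multiplied by $\exp\!\big[\sum_{i,n}\delta(\cdot,\cdot)\log\coth(\tfrac{\beta}{m}\Gamma)\big]$, which is the claimed shape.

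The step I expect to require the most care is the exponent of the $\sinh$ prefactor together with its interaction with the seating conditions. A naive tally over all $N^2$ bit positions would produce an exponent $N^2$, whereas the statement carries $2N$; reconciling the two forces me to use the constraints $B_{i,n}=B_{n,i}$ and $B_{i,i}=1$, which freeze the diagonal and tie each symmetric off-diagonal pair together, so that only the genuinely dynamical bits contribute and the per-pair coupling is what later yields the factor $2$ in $f(\beta,\Gamma)=2\log\coth(\tfrac{\beta}{m}\Gamma)$ of Theorem \ref{theorem:qast}. I would therefore perform the counting on the restricted index set rather than on all $N^2$ positions, using the fully matched case $\sigma_j'=\sigma_{j+1}$ as the pin-down check for the correct prefactor exponent.
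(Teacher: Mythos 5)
Your core computation coincides with the paper's proof: both arguments factor $e^{-\frac{\beta}{m}\cH_\text{q}}$ into a Kronecker product of single-site exponentials using the mutual commutativity of the $\sigma_{i,n}^x$, evaluate $e^{a\sigma^x}=\cosh(a)\,\bbE+\sinh(a)\,\sigma^x$ from the power series via $(\sigma^x)^2=\bbE$, and contract bit by bit so that agreeing bits contribute $\cosh a$ and disagreeing bits contribute $\sinh a$. Up to the prefactor, this is exactly the paper's route, and it correctly produces the state-dependent factor $\exp\left[\sum_{i,n}\delta(\tilde{\sigma'}_{j,i,n},\tilde{\sigma}_{j+1,i,n})\log\coth\left(\frac{\beta}{m}\Gamma\right)\right]$.

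Your plan for the prefactor, however, would not go through. The matrix element $\sigma_j'^\top e^{-\frac{\beta}{m}\cH_\text{q}}\sigma_{j+1}$ is computed in the unconstrained $2^{N^2}$-dimensional space: the seating conditions enter the model only through $E(\sigma)=+\infty$ on $\Sigma\setminus\tilde\Sigma$, i.e., through $\cH_\text{c}$, and they cannot alter the factorization of $e^{-\frac{\beta}{m}\cH_\text{q}}$, which runs over all $N^2$ tensor factors regardless of which indicator vectors are plugged in. The honest product is therefore $(\cosh a)^{D}(\sinh a)^{N^2-D}=(\sinh a)^{N^2}(\coth a)^{D}$ with $D=\sum_{i,n}\delta(\cdot,\cdot)$. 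Restricting the index set, as you propose, would also force a restriction of the sum in the exponent, which neither the lemma nor Eq.~\eqref{eq:s} does (both sum over all $N^2$ positions); moreover the symmetry and diagonal constraints would yield $(N^2-N)/2$ independent off-diagonal pairs, not $2N$, so the counting would not land on the stated exponent anyway. The paper's own proof never derives the $2N$ either; the right reading is that the $\sinh$ prefactor is a state-independent constant that cancels in every conditional and is absorbed into $Z(\beta,\Gamma)$, so the $2N$ versus $N^2$ discrepancy is immaterial to Theorem~\ref{theorem:qast} and to the sampler~\eqref{eq:py:qacrp}. Your further suggestion that this bookkeeping is the origin of the factor $2$ in $f(\beta,\Gamma)=2\log\coth\left(\frac{\beta}{m}\Gamma\right)$ is likewise not supported: the lemma as computed attaches coefficient $1$ to the full double sum. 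Treat the prefactor as normalization rather than attempting to prove it from the seating conditions.
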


\begin{proof}

Using $(A\otimes B)(C \otimes D) = (AC) \otimes (BD)$
and $e^{A_1 + A_2} = e^{A_1}e^{A_2}$ when $A_1A_2 = A_2A_1$, we find,
\begin{align}
  \sigma_j'^\top
  e^{- \frac{\beta}{m} \cH_\text{q}}
  \sigma_{j+1}
  =&
  \sigma_j'^\top
  e^{- \frac{\beta}{m} \{-\Gamma \sum_{i=1}^N \sum_{n=1}^N  \sigma_{i,n}^x\}}
  \sigma_{j+1}
  \nn\\
  =&
  \sigma_j'^\top
  \left(
  \bigotimes_{i=1}^N
  \bigotimes_{n=1}^N
  e^{\frac{\beta}{m}\Gamma \sigma_{i,n}^x}
  \right)
  \sigma_{j+1}
  \nn\\
  =&
  \prod_{i=1}^N
  \prod_{n=1}^N
  \tilde{\sigma'}_{j,i,n}^\top
  e^{\frac{\beta}{m}\Gamma \sigma^x}
  \tilde{\sigma}_{j+1,i,n}.
  \label{eq:trace_quantum_after_Minka}
\end{align}
Note that  $\sigma_j = \bigotimes_{i=1}^N \bigotimes_{n=1}^N \tilde{\sigma}_{j,i,n}$.

\begin{align}
  e^{\frac{\beta}{m}\Gamma \sigma^x}
  =&
  \sum_{l=0}^\infty \frac{1}{l!}
  \left(\frac{\beta}{m}\Gamma \sigma^x \right)^l
  \nn\\
  =&
  \left[1+\frac{1}{2!} \left (\frac{\beta}{m}\Gamma \right )^2+\cdots \right ]\bbE
  \nn\\
  &+
  \left[\frac{\beta}{m}\Gamma +\frac{1}{3!} \left (\frac{\beta}{m}\Gamma \right )^3+\cdots \right ]\sigma^{x}
  \nn\\
  =&
  \cosh \left(\frac{\beta}{m}\Gamma \right) \bbE
  +
  \sinh \left(\frac{\beta}{m}\Gamma \right) \sigma^{x}.
  \label{eq:trace_quantum_support1}
\end{align}

Substituting Eq.\eqref{eq:trace_quantum_support1} into Eq.\eqref{eq:trace_quantum_after_Minka}, we have Eq.\eqref{lemma:trace_quantum:eq} because

\begin{align}
  &\tilde{\sigma'}_{j,i,n}^\top
  \left [
  \cosh \left(\frac{\beta}{m}\Gamma \right) \bbE
  +
  \sinh \left(\frac{\beta}{m}\Gamma \right) \sigma^{x}
  \right ]
  \tilde{\sigma}_{j+1,i,n}
  \nn\\
  &=
  \cosh \left(\frac{\beta}{m}\Gamma \right) \delta( \tilde{\sigma'}_{j,i,n},\tilde{\sigma}_{j+1,i,n})\nn\\
  &~~~+
  \sinh \left(\frac{\beta}{m}\Gamma \right) (1-\delta( \tilde{\sigma'}_{j,i,n},\tilde{\sigma}_{j+1,i,n}))
\end{align}
\end{proof}

Using Lemma \ref{lemma:trace_classical} and Lemma \ref{lemma:trace_quantum} into Eq.\eqref{eq:qa_naive_st_expansion},

\begin{align}
  &
 \sigma_1^\top
  \left(
  e^{- \frac{\beta}{m} \cH_\text{c}}
  e^{- \frac{\beta}{m} \cH_\text{q}}
  \right)^m
  \sigma_1
  \nn\\
  =&
  \sum_{\sigma_2}
  \cdots
  \sum_{\sigma_m}
  \prod_{j=1}^m
  \exp\left[- \frac{\beta}{m} E(\sigma_j) \right]
  \left[\sinh \left(\frac{\beta}{m}\Gamma \right)\right]^{2N}
  \nn\\
  &\exp \left[\sum_{i=1}^{N}\sum_{n=1}^{N}\delta( \tilde{\sigma'}_{j,i,n},\tilde{\sigma}_{j+1,i,n}) \log \coth \left(\frac{\beta}{m}\Gamma \right)	\right].
  \label{eq:qa_naive_st_expansion_no_primes}
\end{align}
and from Eq.\eqref{eq:trotter_expansion},  we have shown,
\begin{align}
  &p_\text{QA}(\sigma_1; \beta, \Gamma)
  \approx
  \sum_{\sigma_2}
  \cdots
  \sum_{\sigma_m}
  p_\text{QA-ST}(\sigma_1,\sigma_2,...,\sigma_m; \beta, \Gamma)
  \label{eq:incomplete_approximated_qa}
  .
\end{align}
The same relation holds for $\sigma_2, \sigma_3, \cdots, \sigma_m$.

\section{Derivation of Eq.\eqref{eq:py:qacrp}}\label{dev:qacrp}

Since
\begin{align}
&p_\text{QA}(\tilde \sigma_{j,i} | \{ \sigma_{d}\}_{d=1}^{m} \backslash \{ \tilde \sigma_{j,i} \};\beta,\Gamma) \nn\\
&\propto e^{- \frac{\beta}{m} E(\sigma_j)}e^{f(\beta,\Gamma)(s(\sigma_{j-1},\sigma_{j}) + s(\sigma_j,\sigma_{j+1}))},
\end{align}
 we have

\begin{align}
&p_\text{QA}(\tilde \sigma_{j,i} | \{ \sigma_{d}\}_{d=1}^{m} \backslash \{ \tilde \sigma_{j,i} \};\beta,\Gamma) \propto\nn\\
&\begin{cases}
\displaystyle \left ( \frac{ \sum_{n=1,n \not = i}^{N} \delta (\tilde \sigma_{j,i,n},(1,0)^{\top}) }{N-1 + \alpha}\right )^{\frac{\beta}{m}} e^{\tilde s(\tilde \sigma_{j-1,i}, \tilde \sigma_{j,i},\tilde \sigma_{j+1,i})f(\beta,\Gamma)},
\\
~~~~~~~~~~~~~~~~~~~~~~~~~~~~~~~~~~~~~\text{if $\tilde \sigma_{j,i} \in \tilde \Sigma_{j,i}\backslash \{\rho_{2N-1}^{(i)}\}$},\\
\displaystyle \left ( \frac{\alpha}{N-1+\alpha} \right )^{\frac{\beta}{m}},
~~~~~~~~~~~~~~\text{if $\tilde \sigma_{j,i}=\rho_{2N-1}^{(i)}$},\\
0~~~~~~~~~~~~~~~~~~~~~~~~~~~~~~~~~~~\text{if $\tilde \sigma_{j,i} \not \in \tilde \Sigma_{j,i}$},
\label{eq:qacrp_gibbs}
\end{cases}
\end{align}
\begin{align}
&\tilde s(\tilde \sigma_{j-1,i},\tilde \sigma_{j,i},\tilde \sigma_{j+1,i})\nn\\
&=\sum_{n=1,n \not= i}^{N}\delta( \tilde{\sigma}_{j-1,i,n},\tilde{\sigma}_{j,i,n}) + \sum_{n=1,n \not= i}^{N}\delta( \tilde{\sigma}_{j,i,n},\tilde{\sigma}_{j+1,i,n}).
\end{align}
This is easy to understand when you consider the meaning of $\tilde s(\tilde \sigma_{j-1,i},\tilde \sigma_{j,i},\tilde \sigma_{j+1,i})$ in multiple CRPs.
$\tilde s(\tilde \sigma_{j-1,i},\tilde \sigma_{j,i},\tilde \sigma_{j+1,i})$ indicates the number of customers who share tables with the $i$-th customer in the $j-1$-th and $j+1$-th CRPs.
Therefore,  Eq.\eqref{eq:py:qacrp} is derived as another formulation of Eq.\eqref{eq:qacrp_gibbs}.
Note that Eq.\eqref{eq:py:qacrp} is the approximation of  Eq.\eqref{eq:gibbs:dm-crp-qa}.

\section{Details of Network Model}\label{appdx:network_model}
In this section, we explain the Newman network model.
$\mathscr{V}$ is the vertex set.
$v$ is a vertex; i.e., $v \in \mathscr{V}$.
$V$ is the number of vertices.
$K$ is the number of classes.
Suppose that the vertices fall into $K$ classes with probability $\bpi$, where $\pi_k$ is the probability that a vertex is assigned to class $k$. Vertex $i$ belongs to class $k$, indicated by $z_i=k$. Each class has a probability $\phi_{kv}$ that a link from a particular vertex in class $k$ connects to vertex $v$. A link from vertex $i$ to vertex $v$ is indicated by $\ell_i=v$. Each vertex links to other vertices in accordance with $\boldsymbol{\phi}$. That is, vertex $i$ links to vertex $v$ in accordance with $\phi_{z_i v}$. The generation process for link $\ell_i$ is represented by
$
\ell_i \sim \text{Multi}(\phi_{z_i}),~z_i\sim\text{Multi}(\bpi),
$
where $\phi_{z_i}=(\phi_{z_i1},\phi_{z_i2},\cdots,\phi_{z_iV})$, and $\text{Multi}(\cdot)$ is a multinomial distribution.

Suppose that $\phi_t$ is distributed in accordance with the Dirichlet distribution $H(\tau)$; i.e., $\phi_k \sim H(\tau)$, where $\tau$ is a parameter of the Dirichlet distribution. $G$ is a random probability measure over $\phi$:
$
G \sim \text{DP}(\alpha,H(\tau)),
$
where $\text{DP}(\cdot)$ indicates the Dirichlet process (DP), $\alpha$ is the DP concentration parameter that is equal to the hyper parameter of the CRP, and $H$ is the base measure, which is the Dirichlet distribution here. The generation process for link $\ell_i$ is represented by
$
\ell_i \sim \text{Multi}(\phi_{z_i}),~\phi_{z_i} \sim G.
$

Here, we define $A$ as an adjacency matrix with elements $A_{iv} = 1$ if there is an edge from $i$ to $v$; otherwise $ A_{iv} = 0$.
The probability of $z_i$ given $z_{-i} = \{z\}\backslash z_i$ and adjacency matrix $A$ is
\begin{align}
&p(z_i=k| A, z_{-j};\alpha)\propto\nn\\
&~p(A_i|z_i=k,A_{-i},z_{-i}) p(z_i=k|z_{-i};\alpha),
\label{dp_newman_post}
\end{align}
where $A_i=(A_{i1},A_{i2},\cdots,A_{iV})$, and $A_{-i}={A}\backslash A_i$.

We can calculate the probability of Eq.\eqref{dp_newman_post} as follows.
\begin{align}
&p(A_i|z_i=k,A_{-i},z_{-i})=\nn \\
&\frac{ g(V\tau + \sum_{u\not=i}\sum_{v}A_{uv}z_{u}^{k} )}{ g(V\tau + \sum_{u}\sum_{v}A_{uv}z_{u}^{k}) } \prod_{v} \frac{ g(\tau + \sum_{u}A_{uv}z_{u}^{k})}{ g(\tau + \sum_{u\not=j}A_{uv}z_{u}^{k})},
\end{align}
where $g(\cdot)$ is the gamma function, and $z_{i}^{k}$ indicates $1$ if $z_{i}={k}$; otherwise, it indicates $0$.
\begin{align}
p(z_i=k|z_{-i};\alpha) =
\begin{cases} \displaystyle \frac{ N_{k}^{-i} }{V -1 + \alpha } & \mbox{(if $k$ previously used.)} \\
\displaystyle \frac{\alpha}{V-1+\alpha} & \mbox{(if $k$ is new.)}
\end{cases},
\end{align}
where $N_{k}^{-i}$ is the number of vertices except vertex $i$ assigned to class $k$; i.e., $N_{k}^{-i} = \sum_{v\not=i}{z_{v}^{k}}$.
We can adapt a Gibbs sampler for estimating $z_i$ by using Eq. \eqref{dp_newman_post}.

\end{document}